\begin{document}

\title{Optimal Configuration of Reconfigurable Intelligent Surfaces with Arbitrary Discrete Phase Shifts}
\author{Seyedkhashayar Hashemi, Hai Jiang, and Masoud Ardakani\thanks{The authors are with the Department of Electrical and Computer Engineering, University of Alberta, Edmonton, AB T6G 1H9, Canada. (e-mail: \{seyedkha, hai1, ardakani\}@ualberta.ca)}}

\maketitle

\begin{abstract}
We address the reflection optimization problem for a reconfigurable intelligent surface (RIS), where the RIS elements feature a set of non-uniformly spaced discrete phase shifts. This is motivated by the actual behavior of practical RIS elements, where it is shown that a uniform phase shift assumption is not realistic. A problem is formulated to find the
optimal refection amplitudes and reflection phase shifts of the RIS elements such that the channel capacity of the target user is maximized. We first prove that in the optimal configuration, each RIS element is either turned off or operates at maximum amplitude. We then develop a method that finds the optimal reflection amplitudes and phases with complexity linear in the number of RIS elements. Some new and interesting insight into the reflection optimization problem is also provided.

%when the direct path from the user to the base station is blocked. They consist of a large number of small elements which apply phase shifts to the incident beam.
%Assuming that the elements have discrete phase shifts, finding the optimal configuration is a complicated task even for the Single Input Single Output (SISO) scenario. Previous studies suggest fast algorithms for finding the optimal solution for RIS with uniform phase shifts. But in practice, RIS phase shifts are not uniform for most of the cases. This could happen due to hardware restrictions. Moreover, if the RIS element is being used at a frequency different from the one it is designed to work at, the phase shifts would become non-uniform since they are frequency dependent. Also, in previous studies, it was shown that phase shift-based surfaces can perform much better than on/off-based ones. The reason was that each element could contribute to the performance enhancement without being turned off. But in this paper, we show that when the phases are not uniform, there might be cases where the best decision for some of the elements is to be turned off. Also, we prove that even if we have a continuous choice for the reflection amplitude, in the optimal solution, the reflection amplitude of each element should be either 0 or 1.

\end{abstract}

\begin{IEEEkeywords}
Reconfigurable intelligent surfaces, discrete phase shifts, exhaustive search, arbitrary phase shifts, optimization.
\end{IEEEkeywords}

\section{Introduction}
\IEEEPARstart{R}{econfigurable}
intelligent surfaces (RIS) have gained lots of attention during the past few years. This new paradigm is one of the candidate technologies for future wireless communication systems\cite{ref1}. An RIS is a metasurface with a number of small passive elements, and is able to shape incoming radio signals by applying controllable phase shifts via the elements \cite{ref2}. Similar to conventional relays, RISs help us have control over the propagation environment but at a lower cost due to the passiveness of the RIS elements \cite{ref4}. In particular, these elements do not require any power amplifiers, making them less costly, more energy-efficient, and environmentally friendly\cite{ref3}.

Although RISs provide numerous benefits, there are still some design challenges that have to be dealt with to facilitate an efficient integration of RISs into wireless systems. Some of these challenges are reflection optimization (i.e., determining the optimal configuration of the RIS elements), channel estimation, and deployment\cite{ref5}. In this paper, we focus on reflection optimization. In specific, for the $n$th element of an RIS, denote $\theta_n = \beta_n e^{j\alpha_n}$ as the reflection coefficient, in which $\beta_n\in [0,~1]$ is the reflection amplitude and  $\alpha_n\in [0,~2\pi)$ is the reflection phase shift. Reflection optimization is to find the optimal $\theta_n$ for all RIS elements.

Assuming that an RIS consists of elements with continuous phase shifts, reflection optimization will not be a challenging task. The optimal configuration solution can be obtained by aligning all controllable paths (i.e, the paths aided by the RIS elements) with the direct path (the uncontrollable path) where all RIS elements are used at their maximal amplitude \cite{ref6, ref10}. However, it is more practical to assume discrete phase shifts for the RIS elements\cite{ref7}. This limitation is due to the hardware structure of RIS elements \cite{ref16}, making the continuous-phase-shift assumption unrealistic. With discrete phase shifts, simply aligning all controllable paths with the uncontrollable path may not be possible. Hence, the optimal solution is not readily available\cite{ref5}.

%signal misalignment may occur at the receiver and thus the communication performance may be diminished \cite{ref7, ref9}. Besides, the reflection optimization problem will not be the same as before.

Several approaches can be used to find the optimal solutions in the discrete-phase-shift scenario. One method is to perform an exhaustive search over all possibilities of all RIS elements. The complexity of this method grows exponentially with the number of RIS elements. More specifically, assuming $K$ discrete phase shifts and $N$ RIS elements, the complexity of the exhaustive search is $O(K^N)$. Given that an RIS can have hundreds of elements, this method is extremely time-consuming\cite{ref8}. Discrete optimization methods like branch-and-bound (BB) can also be applied, but the worst-case time complexity would still be exponential with $N$. A number of sub-optimal solutions have also been studied. Closest point projection (CPP) is one of these methods. CPP is a heuristic approach that aligns each controllable path as much as possible with the uncontrollable path\cite{ref11}. In other words, this method quantizes the solution derived from the continuous-phase-shift optimization problem\cite{ref5}. The time complexity of this algorithm is $O(N)$. Iterative algorithms can also be used. Alternating optimization (AO) is an example of such algorithms. In this method, a single element is optimized at a time while the rest of the elements are set to a constant \cite{ref8}. The same procedure is applied to the rest of the elements until convergence. There exist some other sub-optimal solutions for the discrete-phase-shift optimization problem such as penalty-based methods \cite{ref12, ref13}, alternating direction method of multipliers (ADMM) \cite{ref14}, and more. Only recently, a method has been proposed which is able to determine the optimal configuration \cite{ref15}. The complexity of this solution is linear with the number of RIS elements.

All the aforementioned methods assume uniform phase shifts for the RIS elements, i.e., the discrete phase shifts are evenly spaced within the range $[0, 2\pi)$. However, in practical scenarios, phase shifts are not necessarily uniform \cite{ref16}. When it comes to hardware, there are various types of RIS based on their structure and the materials used. For example, ``varactor diode-based RIS"\cite{ref17, ref18, ref19}, ``pin diode-based RIS"\cite{ref20, ref21, ref22}, and ``liquid crystal-based RIS"\cite{ref23, ref24, ref25} are some of the experimented structures for RIS. Simulations and measurements in \cite{ref17, ref18, ref19,ref20, ref21, ref22,ref23, ref24, ref25} verify that the set of discrete phase shifts provided by existing RIS technologies is not necessarily uniform. Also, the phase shifts of elements depend on the working frequency. For example, consider an RIS element that has been designed to provide uniform phase shifts at 4 GHz. Now if the same element is used at 4.1 GHz, the phase shifts would not remain uniform \cite{ref28}.

The above observation has motivated us to investigate reflection optimization with arbitrary non-uniform discrete phase shifts. In this paper, we develop a method that can find the optimal reflection configuration of the RIS elements with linear complexity. The novelty and contributions of this paper are summarized as follows.
\begin{itemize}
\item Our investigated problem is more practical compared with existing research efforts, considering the fact that discrete phase shifts are not necessarily uniform in a realistic system. As a result, our problem is also more general than problems in existing research.

\item In previous studies with uniform phase shifts, the reflection amplitudes of the RIS elements are set to $\beta_n=1$ for all $n$. This setting is made because of the fact that maximizing the reflected signal strength is always beneficial if the optimal phase shift is adopted. However, we demonstrate that this is not the case for non-uniform phase shifts. In this paper, we prove that each RIS element should either take the maximal reflection signal strength or be simply turned off.

\item To maximize the channel capacity, we propose a method with linear complexity to find the optimal configuration of the RIS elements.

%In our considered system with $N$ RIS elements, there are one direct path and $N$ RIS paths. The overall path channel coefficient (denoted as $h$) is a complex number, which can be represented by a vector in the complex plane. Denote optimal $h$ as $h^*$. By assuming that we know the argument of $h^*$, we derive the optimal configuration of each RIS element, i.e., should the element be turned off? if the element is not turned off, what reflection amplitude and what reflection phase shift should be applied?

\item Our proposed method has a number of search steps. We develop a fast algorithm to further reduce the number of computations needed in these search steps. Our algorithm  also provides a fast sorting for a sorting requirement of our proposed method.

%Based on the optimal configuration by assuming the argument of $h^*$ is known, the optimal configuration of the system can be obtained by searching all possible argument values of $h^*$. Although there are infinite possible argument values of $h^*$, we show that we only need to search $N(K+1)$ cases ($K$ being the number of discrete phase shifts).

%\item When searching the $N(K+1)$ cases, we need $N^2(K+1)$ vector additions. To further reduce the computation complexity, we develop a method that only needs $N(2K+3)$ vector additions.

\item For the optimal overall channel coefficient (which is a complex number), intuitively one would think that the argument may take any value within $[0,2\pi)$. However, counter-intuitively, we show that there are many regions within $[0,2\pi)$, called empty regions,  in which the argument cannot be located. We also give expressions for these empty regions.

%.\cite{ref8} In fact, a RIS with 1-bit phase shift control\cite{ref8} is proven to outperform a RIS with 1-bit amplitude control\cite{ref26} according to studies.\cite{ref5} Such result suggests the importance of phase shift control over amplitude control. It can be proved that if the phase shifts are distributed uniformly, there would be no need to turn off any elements. But if the distribution is not uniform, turning some elements off could be the best decision under certain circumstances and amplitude control would be crucial as well.

\end{itemize}

The remainder of this paper is structured as follows. Section~II presents the RIS system with arbitrary non-uniform discrete phase shifts and formulates the problem. In Section III, our proposed method is presented to optimally solve the problem with linear complexity. Section IV presents an interesting insight into the existence of the so-called empty regions. Simulation results are presented and discussed in Section~V. The conclusions of this work can be found in Section VI.

\section{System Model and Problem Formulation}
\subsection{System Model}

\begin{figure}[!t]
\centering
\includegraphics[width=3.4in]{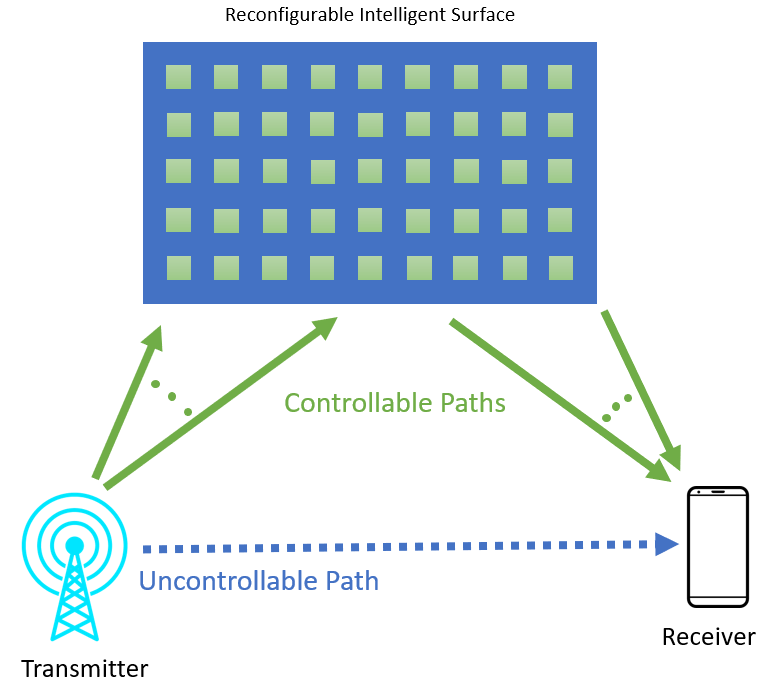}
\caption{The considered RIS-aided communication system.}
\label{system_fig}
\end{figure}

In this paper, we consider a system where a transmitter is communicating with a receiver, aided by an RIS between the transmitter and the receiver, as seen in Fig.~\ref{system_fig}. The RIS has $N$ elements. The reflection coefficient of the $n$th element\footnote{In the sequel, when we say the $i$th element, it means the $i$th RIS element.}, denoted as $\theta_n \in \mathbb{C}$, has the following pattern:
\begin{equation}
\label{deqn_ex1}
\begin{aligned}
&\theta_n = \beta_n e^{j\alpha_n}, n\in \{1,2,...,N\},\\
%&\beta_n \in [0,1]\\
%& \alpha_n \in \{\alpha_0, ..., \alpha_K_-_1 \}\\
\end{aligned}
\end{equation}
in which the reflection amplitude $\beta_n\in [0,~1]$ and the reflection phase shift $\alpha_n\in [0,~2\pi)$ are independent. The channel coefficient of the direct path from the transmitter to the receiver is denoted as $h_d \in \mathbb{C}$. The channel coefficient of the path from the transmitter to the $n$th RIS element and then to the receiver is denoted as $g_n \in \mathbb{C}$. Here, $g_n$ is expressed as
\begin{equation}
\label{deqn_ex2}
g_n = h^\prime_n\theta_nh^{\prime\prime}_{n},
\end{equation}
where $h^\prime_n$ is the channel coefficient from the transmitter to the  $n$th RIS element, and $h^{\prime\prime}_{n}$ is the channel coefficient from the $n$th RIS element to the receiver. Since $h^\prime_n$ and $h^{\prime\prime}_{n}$ are usually used together in our subsequent analysis, we define $v_n$ as the concatenated channel coefficient, expressed as
\begin{equation}
\label{deqn_ex56}
v_n = h^\prime_nh^{\prime\prime}_{n}.
\end{equation}
Thus, $g_n$ can be expressed as
\begin{equation}
\label{gn_ex_new}
g_n=v_n \theta_n=v_n \beta_n e^{j\alpha_n}.
\end{equation}

Thus, between the transmitter and the receiver, the overall channel coefficient (including the direct path and all RIS paths) can be written as:
\begin{equation}
\label{deqn_ex55}
h = h_d + \sum_{n=1}^{N} g_n.
\end{equation}

The received signal at the receiver can be written as:
\begin{equation}
\label{deqn_ex2}
Y = hX + Z = (h_d + \sum_{n=1}^{N} g_n)X + Z,
\end{equation}
in which $X$ is the transmitted signal at the transmitter, and $Z \sim \mathcal{N_C}(0,\,N_0)\,$ is the complex Gaussian noise. Denoting $P$ as the transmit power at the transmitter and $B$ as the bandwidth, the channel capacity can be expressed as:
\begin{equation}
\label{deqn_ex3}
C = B \log_2(1+\text{SNR}) = B \log_2(1+\frac{P|h|^2}{BN_0}).
\end{equation}

\subsection{Problem Formulation}

Our goal is to achieve the maximum channel capacity. Thus, we need to find the optimal $\theta_n$ for each element, that results in the highest possible channel capacity. Accordingly, the following problem is formulated:
\begin{equation}
\label{deqn_ex77}
\begin{aligned}
\max_{\theta_1,\theta_2...,\theta_N} \quad & C\\
\textrm{s.t.} \quad &\beta_n \in [0,1],\\
  &\alpha_n \in [0,2\pi).\\
\end{aligned}
\end{equation}

According to \eqref{deqn_ex3}, in order to maximize the channel capacity, $|h|$ should be maximized.

Since the channel coefficients are all complex numbers, they can be represented by vectors in a complex plane. Thus, in the sequel, we also call a complex number, say complex number $a$, as vector $a$. For vector (complex number) $a$, we will use the following notation
\begin{equation}
\angle {{a}}  = arg({a}) \mod{2\pi}\\
\end{equation}
as the argument of complex number $a$, which is the counterclockwise angle from the positive real axis to vector ${a}$ in the complex plane.

Any two vectors in the complex plane make two angles as seen in Fig.~\ref{fig:fig2}, and the two angles add up to $2\pi$. We define the angle that is not larger than $\pi$ as {\it the angle between the two vectors}. Thus, for vectors (complex numbers) $a$ and $b$, the angle between them can be expressed as
\begin{equation}
ang({a} , {b}) = \min\{| \angle {a} - \angle {b} | , 2\pi - | \angle {a} - \angle {b} |\}.
\end{equation}

\begin{figure}[!t]
\centering
\includegraphics[width=2.5in]{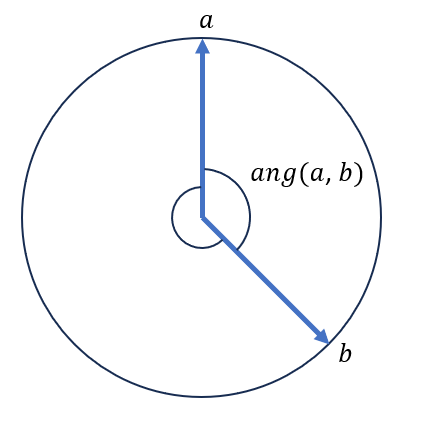}
\caption{The angle between two vectors.}
\label{fig:fig2}
\end{figure}

From (\ref{deqn_ex55}), vector $h$ is the summation of the following $N+1$ vectors: $h_d$ and $g_1,g_2,...,g_N$. From (\ref{gn_ex_new}), vector $g_n$ is obtained by rotating vector $v_n$ counterclockwise in the complex plane by angle $\alpha_n$ with an amplitude amplifier (amplifying factor being $\beta_n$). Thus, we should configure $\alpha_n$ and $\beta_n$ so as to maximize $|h|$. If each RIS element can take an arbitrary {\it continuous} reflection phase shift, then it is optimal to 1) rotate vector $v_n$ such that the resulted vector $g_n$ overlaps\footnote{When we say two vectors overlap, it means that the counterclockwise angles from the positive real axis to the two vectors are equal.} with $h_d$ and 2) apply an amplifying factor $1$. In other words, we should have $\alpha_n=\angle h_d - \angle v_n \text{~~mod~} 2\pi$ and $\beta_n=1$, and accordingly, the optimal configuration of the $n$th element is expressed as
\begin{equation}
\label{deqn_ex28}
\theta^*_n = e^{j[(\angle h_d - \angle v_n) \mod 2\pi]}.
\end{equation}

However, it may not be practical for an RIS element to have an arbitrary continuous reflection phase shift. Thus, in the literature, RIS elements with discrete phase shifts are assumed. A discrete set of uniform reflection phase shifts is assumed in \cite{ref7, ref8, ref15}, in which the reflection phase shifts are evenly spaced within range $[0,~2\pi)$. For example, a discrete set of $M$ uniform reflection phase shifts could be $\{0, \frac{2\pi}{M}, \frac{2\times 2\pi}{M}, \frac{3\times 2\pi}{M},...,\frac{(M-1)\times 2\pi}{M}\}$.

On the other hand, in a real system, it is not easy to guarantee that the reflection phase shifts of an RIS element are uniform.  For example, based on the transmission line model presented in \cite{ref27}, Fig.~\ref{fig:fig3} shows the real implementation of an RIS element that was designed to provide two evenly-spaced reflection phase shifts: $\alpha_1 = 90$\textdegree, $\alpha_2 = 270$. Here {\it phase gap} is defined as the difference of the two reflection phase shifts in a real system. From Fig.~\ref{fig:fig3}, it is seen that the phase gap is 180\textdegree ~ when the operating frequency is at 4 GHz but shrinks significantly as the operating frequency changes to 3 GHz or 5 GHz. Thus, it is more practical to consider reflection phase shifts that may not be uniform in a real system. Therefore, we consider that each RIS element can select from $K$ reflection phase shifts, denoted as $\phi_1,\phi_2, ..., \phi_{K}$, and in general the $K$ reflection phase shifts are not uniform. Accordingly, we reformulate our optimization problem in \eqref{deqn_ex77} as the following problem.
\begin{equation}
\label{deqn_ex69}
\begin{aligned}
\max_{\theta_1,\theta_2,...,\theta_N} \quad & |h|\\
\textrm{s.t.} \quad &\beta_n \in [0,1],\\
  &\alpha_n \in \{\phi_1,\phi_2, ..., \phi_{K} \}, \\
  &0\leq \phi_1 <\phi_2< ... < \phi_{K} < 2\pi.
\end{aligned}
\end{equation}

\begin{figure}[!t]
\centering
\includegraphics[width=3.4in]{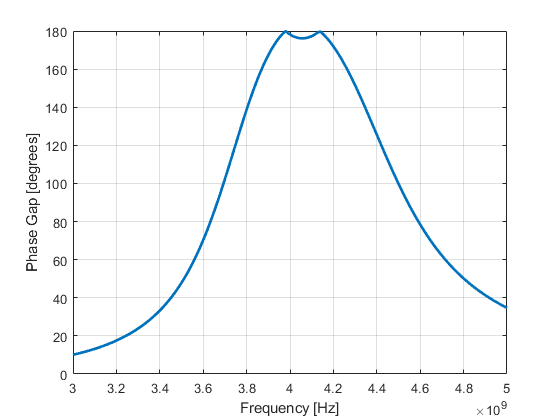}
\caption{Phase gap versus operating frequency in a real system.}
\label{fig:fig3}
\end{figure}

\section{Proposed Method to Get Optimal Solution}
In this section, we will propose a method to get an optimal solution for the problem in (\ref{deqn_ex69}).

%first, we will reformulate the problem defined in \eqref{deqn_ex69} based on mathematical proof. Assume $h_{opt}$ is the summation of all paths after applying the optimal configuration to each element. Next, assuming $\angle h_{opt}$ is known, we provide a number of lemmas that help us determine the configuration of each element based on $\angle h_{opt}$. In the next step, the total number of possibilities for $\angle h_{opt}$ is calculated and finally an algorithm is provided to go through all these possibilities and do a complete search.

\subsection{Problem Transformation}

As we can see in \eqref{deqn_ex69}, reflection amplitudes $\beta_n$'s are chosen from a continuous range $[0,~1]$. For the existing research efforts that consider uniform discrete reflection phase shifts, reflection amplitudes are always one.
Contrary to these existing results, in the following lemma we will prove that for each element in our system, its optimal reflection amplitude is either $0$ or $1$. %  and discrete sets, respectively. For this case, we will now prove that having continuous reflection amplitudes does not give us any performance gain.
\newtheorem{case}{Case}

\newtheorem{lem}{Lemma}
\newtheorem{theo}{Theorem}

\begin{lem}
In the optimal configuration, the reflection amplitude of each element is either 0 or 1.

\end{lem}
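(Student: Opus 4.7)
The plan is to use a simple convexity-in-$\beta_n$ argument. Fix any optimal configuration $(\beta_1^\ast,\alpha_1^\ast,\dots,\beta_N^\ast,\alpha_N^\ast)$ and pick an arbitrary index $n$. Holding every other element's parameters as well as $\alpha_n$ fixed at their optimal values, view $|h|^2$ purely as a function of $\beta_n\in[0,1]$, and show that this function is an upward-opening parabola; the maximum of such a parabola on $[0,1]$ is attained at an endpoint, so $\beta_n^\ast\in\{0,1\}$. Repeating the argument for every $n$ gives the lemma.

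Concretely, I would set $w \;=\; h_d + \sum_{m\neq n} v_m\,\beta_m^\ast\, e^{j\alpha_m^\ast}$ and $u_n \;=\; v_n\, e^{j\alpha_n^\ast}$, so that by \eqref{deqn_ex55} and \eqref{gn_ex_new} we have $h = w + \beta_n u_n$. Expanding $|h|^2 = (w+\beta_n u_n)\overline{(w+\beta_n u_n)}$ yields
\begin{equation}
|h|^2 \;=\; |w|^2 \;+\; 2\beta_n\,\mathrm{Re}(w\overline{u_n}) \;+\; \beta_n^2\,|u_n|^2,
\end{equation}
which is a quadratic in $\beta_n$ with nonnegative leading coefficient $|u_n|^2$. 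Hence $|h|^2$ is a convex function of $\beta_n$ on $[0,1]$, and its maximum over this interval is attained at $\beta_n=0$ or $\beta_n=1$. The degenerate case $|u_n|^2=0$ (i.e., $v_n=0$) is trivial: then $|h|^2$ does not depend on $\beta_n$, so we may set $\beta_n=0$ without loss of optimality.

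The main subtlety, and the point where this lemma departs from the continuous-phase-shift folklore, is that the sign of the linear term $\mathrm{Re}(w\overline{u_n})$ can be negative. In the continuous-phase case one may always rotate $u_n$ to be co-directional with $w$, making this term nonnegative and thereby forcing the maximizer to $\beta_n=1$. With only discrete phases $\{\phi_1,\dots,\phi_K\}$, no such rotation may be available, and the optimum may in fact be $\beta_n=0$. This is exactly what the lemma captures: each element either reflects at full amplitude or is turned off. Once convexity in $\beta_n$ is established, no further calculation is needed; the conclusion follows immediately from the fact that a convex function on a closed interval attains its maximum at an endpoint.
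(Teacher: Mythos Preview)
Your proof is correct and follows essentially the same approach as the paper's: isolate one element, freeze everything else, expand $|h|^2$ as a quadratic in $\beta_n$, and invoke convexity to force the maximizer to an endpoint of $[0,1]$. Your version is in fact a bit more streamlined---the paper first splits on whether $|\bar g_i|=0$ or $|\bar g_i|>0$ and then further on whether $ang(g_i,\bar g_i)\le \pi/2$ or $>\pi/2$, but all of these cases are subsumed by the single observation that the leading coefficient $|u_n|^2\ge 0$ makes the quadratic convex.
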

\begin{proof}
Consider the $i$th element ($i\in\{1,2,...,N\}$). Define
\begin{equation}
\Bar{g}_{i} = h_d + \sum_{n=1,n\neq i}^{N} g_n.
\label{eq13}
\end{equation}
Thus, from \eqref{deqn_ex55} and \eqref{eq13} we have $h=g_i + \Bar{g}_{i}$

We consider two cases: $|\Bar{g}_{i}|=0$ and $|\Bar{g}_{i}|>0$.

\begin{case}
   $|\Bar{g}_{i}| = 0$
   \end{case}

   In this case, $h = g_{i} = v_i \beta_i e^{j\alpha_i}$. In order to maximize $|h|$, the reflection amplitude ($\beta_{i}$) of the $i$th element should be set to the maximum value which is 1.
\begin{case}
   $|\Bar{g}_{i}| > 0$
\end{case}

From $h=g_i + \Bar{g}_{i}$, we have
\begin{equation}
\label{deqn_ex19}
\begin{aligned}
&|h|^2 = |g_{i} + \Bar{g}_{i}|^2\\
&= |g_{i}|^2 + |\Bar{g}_{i}|^2 \\
&+ 2|g_{i}|\times |\Bar{g}_{i}|\cos(ang(g_{i},\Bar{g}_{i}))\\
&= \beta^2_{i}|v_ie^{j\alpha_{i}}|^2 + |\Bar{g}_{i}|^2 \\
&+ 2\beta_{i}|v_ie^{j\alpha_{i}}|\times |\Bar{g}_{i}|\cos(ang(g_{i},\Bar{g}_{i}) ).
\end{aligned}
\end{equation}

   Now consider the following two possible scenarios: $ang(g_{i},\Bar{g}_{i}) \leq \frac{\pi}{2} $ and $ang(g_{i},\Bar{g}_{i}) > \frac{\pi}{2} $.

When $ang(g_{i},\Bar{g}_{i}) \leq \frac{\pi}{2} $, it can be seen from  \eqref{deqn_ex19} that $|h|^2$ is the summation of three non-negative expressions. Thus, in order to maximize $|h|^2$, $\beta_{i}$ should be set to its maximum value which is 1.

When $ang(g_{i},\Bar{g}_{i}) > \frac{\pi}{2} $, from \eqref{deqn_ex19} it can be seen that $|h|^2$ is a convex function of $\beta_{i}$. For this convex function to attain its maximum value over range $\beta_{i}\in [0,1]$, $\beta_{i}$ should take either value 0 or value 1.

This completes the proof.

%\newtheorem{axi}{Axiom}
%\begin{axi}
%If f(x) is convex on the interval a $\leq$ x $\leq$ b, then f(x) attains a maximum, and that value is either
%f(a) or f(b).

%\end{axi}
%In this case, according to equation \eqref{deqn_ex19}, since $|v_ne^{j\alpha_{n,opt}}|^2$ is non-negative, $|h_{opt}|^2$ can be considered as a convex function of $\beta_{n, opt}$. According to Axiom 1, $\beta_{n, opt}$ should be set to either 0 or 1.
\end{proof}
As we can see in Lemma 1, for some elements $\beta_{i}=0$ could be the optimal solution. Whether or not we have elements with $\beta_{i}=0$ depends on the way the phase shifts are distributed. For example, if the phase shifts are distributed uniformly, $\beta_{i}=1$ would be the optimal solution for all elements. This is why existing work, where phases are distributed uniformly, always considers $\beta_i=1$. However, with non-uniform phase shifts, for some elements $\beta_i=0$ will be optimal, to be shown in Section \ref{sec:optimal_with_angleknown} and \ref{sec:optimalgnfromknowangle}.

According to Lemma 1, we can transform our formulated problem in \eqref{deqn_ex69} to the following problem:
\begin{equation}
\label{deqn_ex14}
\begin{aligned}
\max_{\theta_1,\theta_2,...,\theta_N} \quad & |h|\\
\textrm{s.t.} \quad &\beta_n \in \{0,1\},\\
  &\alpha_n \in \{\phi_1,\phi_2, ..., \phi_{K} \}, \\
 &0\leq \phi_1 <\phi_2< ... < \phi_{K} < 2\pi.
\end{aligned}
\end{equation}
Note that when the reflection amplitude of an element is set to 0, the element's reflection phase shift will not be important anymore. We will refer to this state of the element as the ``off" state. Thus, for the reflection coefficient $\theta_n$ of the $n$th element, its optimal value is taken from the set of $K+1$ values: $0,e^{j\phi_1}, e^{j\phi_2},..., e^{j\phi_{K}}$. Accordingly, the resulted $g_n$ is from a set of $K+1$ values: $0, F_{n,1}, F_{n,2},...,F_{n,K}$, in which
\begin{equation}\label{e:vn_to_Fni}
F_{n,i} = v_n e^{j\phi_i}, i=1,2,...,K.
\end{equation}
It can be seen that $F_{n,i}$ is actually the resulted $g_n$ when the $n$th element applies reflection coefficient $e^{j\phi_i}$.

Thus, the problem in (\ref{deqn_ex14}) is  equivalent to the following problem:
\begin{equation}
\label{deqn_ex49}
\begin{aligned}
\max_{g_1,g_2,...,g_N} \quad & |h|\\
\textrm{s.t.} \quad &g_n \in \{0,F_{n,1},F_{n,2},...,F_{n,K}\} .
\end{aligned}
\end{equation}

For the problem in (\ref{deqn_ex49}), denote the optimal $g_n$ as $g_n^*$, and denote the optimal resulted $h$ as $h^*$. To solve the problem, in Section \ref{sec:optimal_with_angleknown} we will first find $g_n^*$ by assuming $\angle h^*$ (i.e., the counterclockwise angle from the positive real axis to vector $h^*$) is known. Then in Section \ref{sec:optimalgnfromknowangle} the overall optimal configuration of the considered system is found by comparing the achieved $|h^*|$ values associated with all possible cases of $\angle h^*$.

%Now that the problem is formulated, a method is required to find the optimal solution. This method is explained in the next Section.

%We are going to solve the formulated problem in three steps. First, we provide a method for determining the configuration of each element $\theta_n$ assuming $\angle h_{opt}$ is known. But $\angle h_{opt}$ can not be determined without knowing the optimal configuration. So we should try all possibilities for $\angle h_{opt}$ and find the one with the biggest $|h_{opt}|$. This appears to be impossible since there are infinite possibilities for $\angle h_{opt}$. In the second step, we prove that a complete search can be done without missing any possibilities only by checking a finite number of states. In the end, an algorithm is proposed for doing the search and determining the optimal configuration.

\subsection{Getting $g_n^*$ when $\angle h^*$ is known}\label{sec:optimal_with_angleknown}

Here we assume that we know $\angle h^*$ but do not know $|h^*|$. Next, we first present two lemmas (Lemma 2 and Lemma 3) showing some properties of $g_n^*$, and give a theorem (Theorem 1) for finding optimal $g_n^*$.

\begin{lem}
Consider the $n$th element. Among $F_{n,1}, F_{n,2},...,F_{n,K}$, if there exists $F_{n,i}$ such that the angle between ${h}^*$ and $F_{n,i}$ is less than $\frac{\pi}{2}$, then the element should not be turned off, i.e., $g_n^*\neq 0$, in the optimal configuration.
\end{lem}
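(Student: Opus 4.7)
The plan is to argue by contradiction, mirroring the structure of the proof of Lemma 1 but now with the toggle being the on/off choice of the $n$th element rather than a continuous variation of $\beta_i$.

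Suppose, for contradiction, that in the optimal configuration $g_n^* = 0$ even though some $F_{n,i}$ with $\mathrm{ang}(h^*, F_{n,i}) < \pi/2$ exists. Define $\bar g_n^* = h_d + \sum_{m\neq n} g_m^*$ as in \eqref{eq13}. Because $g_n^* = 0$, we have $h^* = \bar g_n^*$. Now consider the alternative configuration which agrees with the optimal one on every element $m\neq n$ but assigns the $n$th element reflection coefficient $e^{j\phi_i}$ (so its contribution becomes $F_{n,i}$ instead of $0$). This alternative is feasible for problem \eqref{deqn_ex49}, and its resulting overall channel is $h' = \bar g_n^* + F_{n,i} = h^* + F_{n,i}$.

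Next, I would compute $|h'|^2$ via the same law-of-cosines identity used in \eqref{deqn_ex19}, namely
\begin{equation*}
|h'|^2 = |h^*|^2 + |F_{n,i}|^2 + 2|h^*|\,|F_{n,i}|\cos\bigl(\mathrm{ang}(h^*, F_{n,i})\bigr).
\end{equation*}
The hypothesis $\mathrm{ang}(h^*, F_{n,i}) < \pi/2$ makes the cosine strictly positive, and the very fact that this angle is defined forces $|h^*| > 0$ and $|F_{n,i}| = |v_n| > 0$. Hence the last term is strictly positive, and $|F_{n,i}|^2 > 0$ as well, so $|h'|^2 > |h^*|^2$. This contradicts the optimality of $h^*$ and completes the argument.

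I do not anticipate any real obstacle here: the main subtlety is simply checking that the angle hypothesis implicitly guarantees $h^*\neq 0$ and $v_n\neq 0$, so that the cosine-law expansion and the term $\mathrm{ang}(h^*, F_{n,i})$ are well-defined. The degenerate case $|h^*|=0$ (which the hypothesis rules out) would in any event be handled trivially, because turning the $n$th element on to $F_{n,i}$ would yield $|h'|=|v_n|>0$ and contradict optimality directly.
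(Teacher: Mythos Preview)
Your proof is correct and follows essentially the same contradiction argument as the paper: assume $g_n^*=0$, switch the $n$th element on to $F_{n,i}$, and observe that $|h^*+F_{n,i}|>|h^*|$ because the angle is acute. The only difference is that you spell out the law-of-cosines computation and the nondegeneracy of $h^*$ and $v_n$ explicitly, whereas the paper simply asserts the inequality geometrically via a figure.
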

\begin{proof}
We use proof by contradiction. For the $n$th element, suppose the angle between ${h}^*$ and $F_{n,i}$ is less than $\frac{\pi}{2}$. Assume the optimal configuration of the $n$th element is $g_n^*=0$ (i.e., the element is turned off). Recall that $h^*$ is optimal $h$.

In the optimal configuration of the system, if we turn on the $n$th element and make $g_n=F_{n,i}$ (i.e., the $n$th element applies reflection coefficient $e^{j\phi_i}$), then the resulted $h$ would be given as $h^\dag=h^* + F_{n,i}$, as shown in Fig.~\ref{lem2pr}. Since the angle between ${h}^*$ and $F_{n,i}$ is less than $\frac{\pi}{2}$, we know that $|h^\dag| > |h^*|$, which contradicts the fact that $h^*$ is the optimal $h$.

%Consider the system in Fig.\ref{fig:fig4a}. Let us assume $\theta^*_{n} = 0$. Switching  $\theta_{n}$ from 0 to $e^{j\alpha_n}$ would result in $F= h^{*} + g_{n}$. Since the angle between $h^{*}$ and $g_{n}$ is assumed to be less than $\frac{\pi}{2}$, $|F| > |h^{*}|$, we have reached a better configuration.
\end{proof}
\begin{figure}[!t]
\centering
\includegraphics[width=1.7in]{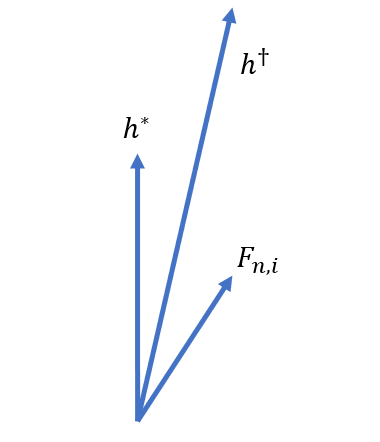}
\caption{$h^*$, $F_{n,i}$, and $h^\dag$ in proof of Lemma 2.}
\label{lem2pr}
\end{figure}

\begin{lem}
Consider the $n$th element. If the angle between ${h}^{*}$ and every $F_{n,i}$ ($i=1,2,...,K$) is not smaller than $\frac{\pi}{2}$, then the $n$th element should be turned off, i.e., $g_n^*=0$, in the optimal configuration.
\end{lem}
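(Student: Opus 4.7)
The plan is to argue by contradiction, closely mirroring the proof of Lemma 2 but swapping ``turning on'' for ``turning off.'' I would suppose toward a contradiction that in the optimal configuration $g_n^* \neq 0$, so that $g_n^* = F_{n,i_0}$ for some $i_0 \in \{1, 2, \ldots, K\}$. I then exhibit an alternative feasible configuration whose overall coefficient $h^\dagger$ satisfies $|h^\dagger| > |h^*|$, contradicting the optimality of $h^*$. The alternative configuration is obtained by turning off the $n$th element while leaving every other element's choice unchanged, so that $h^\dagger = h^* - F_{n,i_0}$. A picture analogous to Fig.~\ref{lem2pr} (with $F_{n,i_0}$ pointing back from $h^*$ instead of forward from it) would make the geometry transparent.

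The key inequality follows from the law of cosines applied to the triangle formed by $h^*$, $h^\dagger$, and $F_{n,i_0}$. Writing
\[
|h^\dagger|^2 = |h^* - F_{n,i_0}|^2 = |h^*|^2 + |F_{n,i_0}|^2 - 2\,|h^*|\,|F_{n,i_0}|\,\cos\bigl(ang(h^*, F_{n,i_0})\bigr),
\]
and using the hypothesis $ang(h^*, F_{n,i_0}) \geq \pi/2$ (so that the cosine is non-positive), I obtain $|h^\dagger|^2 - |h^*|^2 \geq |F_{n,i_0}|^2$. Provided $|F_{n,i_0}| > 0$, this yields the strict inequality $|h^\dagger| > |h^*|$ and completes the contradiction.

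The only bookkeeping issue, which I do not expect to be a genuine obstacle, is the degenerate case $|F_{n,i_0}| = 0$. This occurs exactly when $v_n = 0$, in which case every $F_{n,i}$ equals $0$ and the $n$th element contributes nothing regardless of its reflection coefficient; then $g_n^* = 0$ is trivially achievable and the lemma's conclusion holds with no angle condition needed. Excluding this trivial case at the outset keeps the main argument clean. Apart from this, the argument is a direct geometric analogue of Lemma 2, and together the two lemmas give a complete dichotomy: element $n$ should be turned on iff some $F_{n,i}$ makes an angle strictly less than $\pi/2$ with $h^*$, and turned off iff every $F_{n,i}$ makes an angle at least $\pi/2$ with $h^*$.
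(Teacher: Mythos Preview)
Your proposal is correct and follows essentially the same route as the paper: assume $g_n^*=F_{n,l}$ for some $l$, turn the $n$th element off to obtain $h^\dagger=h^*-F_{n,l}$, and use the fact that the angle between $h^*$ and $F_{n,l}$ is at least $\pi/2$ to conclude $|h^\dagger|>|h^*|$, a contradiction. The paper states the last inequality by reference to a figure, whereas you spell it out via the law of cosines and also handle the degenerate case $v_n=0$; these are minor additions of rigor rather than a different argument.
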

\begin{proof}
We use proof by contradiction. Assume the $n$th element is not turned off in the optimal configuration, and suppose $g_n^*=F_{n,l}$, $l\in\{1,2,...,K\}$. So the angle between ${h}^{*}$ and $F_{n,l}$  is not smaller than $\frac{\pi}{2}$.  Recall that $h^*$ is optimal $h$.

In the optimal configuration of the system, if we turn off the $n$th element, then the resulted $h$ is expressed as $h^\dag = h^* - F_{n,l}$, as shown in Fig.~\ref{lem3pr}. Since the angle between ${h}^{*}$ and $F_{n,l}$ is greater than or equal to $\frac{\pi}{2}$, we have $|h^\dag| > |h^*|$, which contradicts  the fact that $h^*$ is the optimal $h$.

%Since for all reflection phase shifts $\alpha_n$, the angle between $h^*$ and $g_{n}$ becomes greater than $\frac{\pi}{2}$, we can conclude that for the optimal reflection phase shifts $\alpha^*_{n}$ the angle between $h^*$ and $g^*_{n}$ becomes greater than $\frac{\pi}{2}$.Consider the system in Fig.\ref{fig:fig4b}. Now proof by contradiction can be applied. Let us assume $\beta^*_{n} = 1$ meaning that the element is not off. Switching  $\beta_{n}$ from 1 to 0 would result in $F= h^{*} - g^*_{n}$. Since the angle between $h^{*}$ and $g^*_{n}$ is assumed to be greater than $\frac{\pi}{2}$, $|F| > |h^{*}|$. We have reached the suboptimality of $h^{*}$ which contradicts our initial assumption. Thus, $\beta^*_{n} = 0$, and the element should be turned off.
\end{proof}
\begin{figure}[!t]
\centering
\includegraphics[width=1.7in]{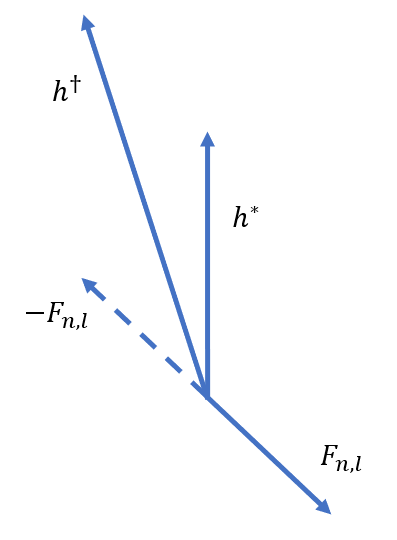}
\caption{$h^*$, $F_{n,l}$, and $h^\dag$ in proof of Lemma 3.}
\label{lem3pr}
\end{figure}

\begin{theo}
For the $n$th element, among vectors $F_{n,1}, F_{n,2},...,F_{n,K}$, denote $F_{n,i}$ as the vector that has the smallest angle with ${h}^*$. We have the following results. (i) If the angle between $F_{n,i}$ and ${h}^*$ is less than $\pi/2$, then $g_n^*=F_{n,i}$, and $F_{n,i}$ is the only vector (among $F_{n,1}, F_{n,2},...,F_{n,K}$) that has the smallest angle with vector ${h}^*$. \\
(ii) The angle between $F_{n,i}$ and ${h}^*$ is not equal to $\pi/2$. \\
(iii) If the angle between $F_{n,i}$ and ${h}^*$ is more than $\pi/2$, then $g_n^*=0$ (i.e., the $n$th element should be turned off).
\end{theo}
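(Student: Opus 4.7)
My plan is to dispatch the three parts of the theorem using Lemmas~2 and 3 together with a simple ``swap-and-compare'' strategy. I would handle them in the order (iii), (ii), (i), since they grow progressively more involved. The critical structural fact I exploit throughout is that $|F_{n,1}| = \cdots = |F_{n,K}| = |v_n|$, because $F_{n,j} = v_n e^{j\phi_j}$.

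For part (iii), since $F_{n,i}$ realizes the smallest angle with $h^*$ among $F_{n,1},\dots,F_{n,K}$ and this angle already exceeds $\pi/2$, every $F_{n,j}$ makes an angle of at least $\pi/2$ with $h^*$, so Lemma~3 immediately delivers $g_n^* = 0$. For part (ii), I would suppose for contradiction that the smallest angle equals $\pi/2$. Then Lemma~3 again forces $g_n^* = 0$. Turning element $n$ on with reflection coefficient $e^{j\phi_i}$ produces $h^\dag = h^* + F_{n,i}$, and because $\cos(\pi/2)=0$, a direct expansion yields $|h^\dag|^2 = |h^*|^2 + |v_n|^2 > |h^*|^2$, contradicting the optimality of $h^*$.

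For part (i), the minimum angle is strictly less than $\pi/2$, so Lemma~2 rules out $g_n^* = 0$, forcing $g_n^* = F_{n,l}$ for some index $l$. Let $S$ denote the set of indices attaining the minimum angle with $h^*$ (so $i \in S$). First I would show $l \in S$: if instead $l \notin S$, the swap $h^\dag = h^* - F_{n,l} + F_{n,i}$ satisfies
\begin{equation*}
|h^\dag|^2 - |h^*|^2 = |F_{n,i} - F_{n,l}|^2 + 2|h^*||v_n|\bigl(\cos(ang(h^*, F_{n,i})) - \cos(ang(h^*, F_{n,l}))\bigr),
\end{equation*}
in which both terms are strictly positive (the first because $F_{n,i} \neq F_{n,l}$, and the second by strict monotonicity of $\cos$ on $[0,\pi]$ combined with $ang(h^*, F_{n,i}) < ang(h^*, F_{n,l})$). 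This yields $|h^\dag| > |h^*|$, a contradiction; hence $l \in S$. Next I would prove $|S|=1$ by the same maneuver: for any $j \in S \setminus \{l\}$, the cosine terms now cancel exactly, leaving $|h^\dag|^2 - |h^*|^2 = |F_{n,j} - F_{n,l}|^2 > 0$, another contradiction. Thus $S = \{i\}$ and $g_n^* = F_{n,i}$.

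The main obstacle is checking that each swap strictly improves $|h|$; this rests on the equi-modular property $|F_{n,j}| = |v_n|$ and on the strict monotonicity of $\cos$ over $[0,\pi]$. Everything else reduces to a routine expansion of $|a+b|^2 = |a|^2 + |b|^2 + 2\mathrm{Re}(a\bar b)$, together with the identity $2\mathrm{Re}(h^*\overline{F_{n,j}}) = 2|h^*||v_n|\cos(ang(h^*, F_{n,j}))$, which uses the convention $ang(\cdot,\cdot) \in [0,\pi]$.
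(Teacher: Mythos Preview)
Your proposal is correct. Parts~(ii) and~(iii) coincide with the paper's arguments almost word for word. For Part~(i), both you and the paper rely on Lemma~2 plus the same ``swap-and-compare'' maneuver, but the executions diverge. The paper argues geometrically: it splits into two scenarios ($F_{n,i}$ and $F_{n,l}$ on the same side or on opposite sides of $h^*$), and within the second scenario into two further sub-cases, each resolved by a figure showing that $h^*-F_{n,l}$ lies closer to $F_{n,i}$ than to $F_{n,l}$; the uniqueness of the minimizer is then handled by a separate picture-based argument. Your single identity
\[
|h^\dag|^2 - |h^*|^2 \;=\; |F_{n,i}-F_{n,l}|^2 \;+\; 2|h^*|\,|v_n|\bigl(\cos(ang(h^*,F_{n,i})) - \cos(ang(h^*,F_{n,l}))\bigr)
\]
collapses all of those cases into one line, and the uniqueness claim ($|S|=1$) drops out by the same formula with the cosine terms cancelling. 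Your route is tighter and needs no figures; the paper's geometric treatment is more visual but noticeably longer. One minor point: your argument tacitly assumes $h^*\neq 0$ and $v_n\neq 0$ so that $ang(\cdot,\cdot)$ is defined and $F_{n,i}\neq F_{n,l}$; these degenerate cases are trivial and the paper does not address them either.
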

\begin{proof}
$ $\newline
{\bf Proof for  Part (i)}.

For Part (i), the angle between $F_{n,i}$ and ${h}^*$ is less than $\pi/2$. From Lemma 2 we know that the $n$th element should not be turned off.

Firstly, we consider that $F_{n,i}$ is the only vector (among $F_{n,1}, F_{n,2},...,F_{n,K}$) that has the smallest angle with vector ${h}^*$ (this statement will be proved later). We will use proof by contradiction to prove $g_n^*=F_{n,i}$.

Assume $g_n^*\neq F_{n,i}$. Suppose $g_n^*=F_{n,l}$ ($l\in\{1,2,...,K\}, l\neq i$). So the angle between $F_{n,i}$ and ${h}^*$ is smaller than the angle between $F_{n,l}$ and ${h}^*$. We consider two scenarios: vectors $F_{n,i}$ and $F_{n,l}$ are on the same side of vector ${h}^*$ (Scenario I), or on different sides of vector ${h}^*$ (Scenario II).

Consider Scenario I when vectors $F_{n,i}$ and $F_{n,l}$ are on the same side of vector ${h}^*$, as shown in Fig.~\ref{scenario1}. Thus, vector $F_{n,i}$ is closer to vector ${h}^*$ than $F_{n,l}$ to vector ${h}^*$. Recall that we have $g_n^*= F_{n,l}$ in the optimal configuration, and $h^*$ is the optimal $h$. In the optimal configuration, if we change $g_n$ from $F_{n,l}$ to $F_{n,i}$, the resulted $h$ is expressed as $h^\dag = h^*- F_{n,l} + F_{n,i} $. We look at vector $h^*- F_{n,l}$, which is the summation of vector $h^*$ and vector $-F_{n,l}$. As shown in Figure \ref{scenario1}, vector $h^*- F_{n,l}$ is located in between vector $h^*$ and vector $-F_{n,l}$. It can be seen that vector $(h^*- F_{n,l})$ is closer to vector $F_{n,i}$ than it is to vector $F_{n,l}$. Since vector $F_{n,i}$ and vector $F_{n,i}$ have the same amplitude, we can conclude that the summation of vector $(h^*- F_{n,l})$ and vector $F_{n,i}$ (the summation expressed as $h^*- F_{n,l} + F_{n,i}=h^\dag$) has a large amplitude than the amplitude of summation of vector $(h^*- F_{n,l})$ and vector $F_{n,l}$ (the summation expressed as $h^*- F_{n,l} + F_{n,l} = h^*$). In other words, we have $|h^\dag|> |h^*|  $, which contradicts the fact that $h^*$ is the optimal $h$.

\begin{figure}[!t]
\centering
\includegraphics[width=1.9in]{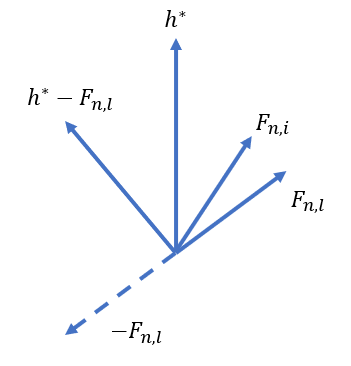}
\caption{Scenario I in proof of Part (i) of Theorem 1.}
\label{scenario1}
\end{figure}
Now we consider Scenario II when vectors $F_{n,i}$ and $F_{n,l}$ are on different sides of vector ${h}^*$. Recall that  $F_{n,i}$ is the only vector (among $F_{n,1}, F_{n,2},...,F_{n,K}$) that has the smallest angle with vector ${h}^*$. Thus, vector $F_{n,i}$ is closer to vector ${h}^*$ than $F_{n,l}$ to vector ${h}^*$. In the optimal configuration in which $g_n^*=F_{n,l}$, if we change $g_n$ from $F_{n,l}$ to $F_{n,i}$, the resulted $h$ is expressed as $h^\dag = h^*- F_{n,l} + F_{n,i} $. We have two possible situations for the location of vector $(h^*- F_{n,l})$.
\begin{itemize}
\item In Situation 1, vector $(h^*- F_{n,l})$ is located between vector $-F_{n,l}$ and vector $F_{n,i}$, as shown in the left-hand side of Fig.~\ref{scenario2}. In this situation, apparently $(h^*- F_{n,l})$ is closer to vector $F_{n,i}$ than it is to vector $F_{n,l}$.

\item In Situation 2, vector $(h^*- F_{n,l})$ is located between vector $F_{n,i}$ and vector $h^*$, as shown in the right-hand side of Fig.~\ref{scenario2}. Since vector $F_{n,i}$ is closer to vector ${h}^*$ than $F_{n,l}$ to vector ${h}^*$, it can be seen that $(h^*- F_{n,l})$ is closer to vector $F_{n,i}$ than it is to vector $F_{n,l}$.
\end{itemize}
In both situations, $(h^*- F_{n,l})$ is always closer to vector $F_{n,i}$ than it is to vector $F_{n,l}$. Similar to Scenario I, we have $|h^\dag|> |h^*|$, which contradicts the fact that $h^*$ is the optimal $h$.
\begin{figure}[!t]
\centering
\includegraphics[width=3.6in]{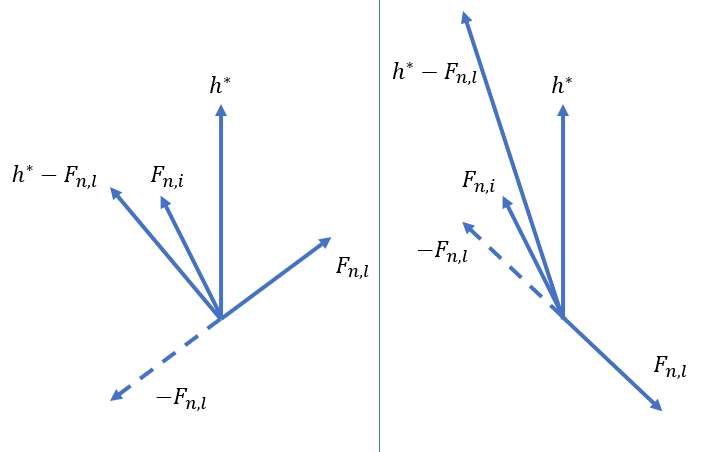}
\caption{Scenario II in proof of Part (i) of Theorem 1.}
\label{scenario2}
\end{figure}

Summarizing Scenario I and Scenario II, we have $g_n^*=F_{n,i}$ if $F_{n,i}$ is the only vector (among $F_{n,1}, F_{n,2},...,F_{n,K}$) that has the smallest angle with vector ${h}^*$.

Now we prove that if $F_{n,i}$ has the smallest angle with ${h}^*$ and the angle is less than $\pi/2$, then $F_{n,i}$ is the only vector (among $F_{n,1}, F_{n,2},...,F_{n,K}$) that has the smallest angle with vector ${h}^*$. We use proof by contradiction. Assume there is another $F_{n,l}$ ($l\in \{1,2,...,K\}, l\neq i$), and vectors $F_{n,i}$ and $F_{n,l}$ have the same angle to vector $h^*$. So vectors $F_{n,i}$ and $F_{n,l}$ are on different sides of vector ${h}^*$, as shown in Fig.~\ref{lemma4pt1}. From the above proof of Part (i), it can be seen that among $F_{n,1}, F_{n,2},...,F_{n,K}$, if $g_n$ takes either $F_{n,i}$ or $F_{n,l}$, the achieved $|h|$ is higher than that when $g_n$ takes any of the other $K-2$ values. Thus, $g_n^*$ should be either $F_{n,i}$ or $F_{n,l}$.

\begin{figure}[!t]
\centering
\includegraphics[width=2in]{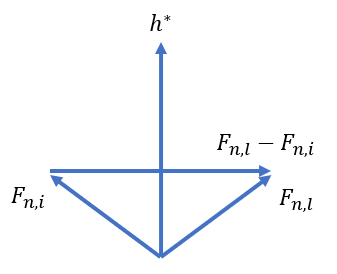}
\caption{Demonstration when vectors $F_{n,i}$ and $F_{n,l}$ have the same smallest angle to vector $h^*$.}
\label{lemma4pt1}
\end{figure}

\begin{itemize}
\item When $g_n^*$ is $F_{n,i}$, it means that $h^*$ is achieved when $g_n=F_{n,i}$. In this optimal configuration, if $g_n$ changes from $F_{n,i}$ to $F_{n,l}$, then the achieved $h$ is expressed as $h^\dag = h^* - F_{n,i} + F_{n,l} = h^*+ (F_{n,l} - F_{n,i})$. From Fig.~\ref{lemma4pt1} it is seen that the angle between vector $h^*$ and vector $(F_{n,l} - F_{n,i})$ is $\pi/2$. Thus, vector $h^*+ (F_{n,l} - F_{n,i})$ has a larger amplitude than that of $h^*$. In other words, $|h^\dag| > |h^*|$, which contradicts the fact that $h^*$ is the optimal $h$.

\item When $g_n^*$ is $F_{n,l}$, similarly it also leads to a contradiction.
\end{itemize}
Since a contradiction is always the result, we can conclude that $F_{n,i}$ is the only vector (among $F_{n,1}, F_{n,2},...,F_{n,K}$) that has the smallest angle with vector ${h}^*$.

{\bf Proof of Part (ii).}

We use proof by contradiction. Assume the angle between $F_{n,i}$ and ${h}^*$ is equal to $\pi/2$. Based on Lemma 3, it can be seen that the $n$th element should be turned off in the optimal configuration of the system in which $h^*$ is achieved. In the optimal configuration of the system, if we turn on the $n$th element and make $g_n=F_{n,i}$, then the achieved $h$ is expressed as $h^\dag = h^* + F_{n,i}$. Since the angle between $F_{n,i}$ and ${h}^*$  is equal to $\pi/2$, we have $|h^\dag| > |h^*|$, which contradicts the fact that $h^*$ is the optimal $h$.

{\bf Proof of Part (iii).}

If the angle between $F_{n,i}$ and ${h}^*$ is larger than $\pi/2$, then from Lemma 3 we have $g_n^*=0$.

\end{proof}

\subsection{Getting $g_n^*$ for All Possible $\angle h^*$ with Linear Complexity}\label{sec:optimalgnfromknowangle}
In the preceding subsection, we get $g_n^*$ when $\angle h^*$ is known. However, $\angle h^*$ is unknown in advance. Thus, to get $g_n^*$ of the system, theoretically we should exhaustively search all possible values of $\angle h^* \in [0, 2\pi)$ and take the $g_n^*$ that maximizes $|h|$. Although there are an infinite number of $\angle h^*$ values in $[0, 2\pi)$, next we show that we only need to search a finite set of possibilities.

In the complex plane, consider a circle with a center at the origin and a radius of 1. This is the circle that we refer to when we say ``the circle" in the sequel. Consider the $n$th element as the target element. Now we place $K$ vectors: $F_{n,1}, F_{n,2},...,F_{n,K}$ in the complex plane, as shown in Fig.~\ref{FNIK}. The $K$ vectors partition the circle into $K$ regions: the region from $F_{n,1}$ to $F_{n,2}$, the region from $F_{n,2}$ to $F_{n,3}$, ...,  the region from $F_{n,K-1}$ to $F_{n,K}$, and the region from $F_{n,K}$ to $F_{n,1}$. Among all the regions, at most one region is more than or equal to half of the circle, i.e., the angle of the region\footnote{Here a region is a sector. So the angle of the region is the angle of the sector.} is more than or equal to $\pi$. Without loss of generality, we consider that one region has an angle larger than $\pi$.\footnote{The case when all regions have their angles less than $\pi$ and the case when one region has its angle equal to $\pi$ can be treated similarly.}
\begin{figure}[!t]
\centering
\includegraphics[width=2.4in]{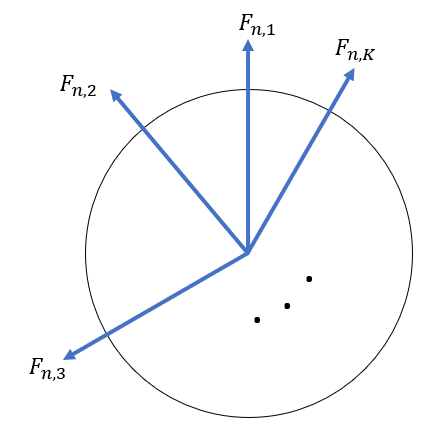}
\caption{Demonstration of $F_{n,1}, F_{n,2},...,F_{n,K}$.}
\label{FNIK}
\end{figure}

Recall that we should exhaustively search all possible $\angle h^*$ values within $[0, 2\pi)$. When $\angle h^*$ varies from $0$ to $2\pi$, it means that ${h}^*$ rotates in the complex plan counterclockwise from the positive real axis until it is back to the positive real axis. Consider that ${h}^*$ rotates counterclockwise within the region from $F_{n,i}$ to $F_{n,i+1}$. We have two scenarios as follows.
\begin{itemize}
\item If the counterclockwise angle from $F_{n,i}$ to $F_{n,i+1}$ is less than $\pi$, as shown in Fig.~\ref{lt90}, then the angle between ${h}^*$ and at least one of vectors $F_{n,i}$ and $F_{n,i+1}$ should be less than $\pi/2$. Thus, from Theorem 1, in the optimal configuration, $g_n^*$ is either $F_{n,i}$ or $F_{n,i+1}$, whichever is closer to ${h}^*$. Thus, in the region from $F_{n,i}$ to $F_{n,i+1}$, we can have the middle vector (i.e., the vector in the middle of $F_{n,i}$ and $F_{n,i+1}$) with amplitude being $1$, denoted as $S_{n,i} = e^{j(\angle F_{n,i} + ang(F_{n,i}, F_{n,i+1})/2)}$, as shown in Fig.~\ref{lt90} (the empty region in this figure will be discussed in Section \ref{sec:empty_region}). Vector $S_{n,i}$ is called the {\it separation line} in the region from $F_{n,i}$ to $F_{n,i+1}$. Note that this separation line is associated with the $n$th element, and thus, is called the $n$th element's separation line. From Theorem 1, we know that it is impossible for $h^*$ to overlap with the separation line $S_{n,i}$. Thus, when ${h}^*$ rotates from $F_{n,i}$ to the separation line, we have $g_n^* = F_{n,i}$; when $h^*$ rotates from the separation line to $F_{n,i+1}$, we have $g_n^* = F_{n,i+1}$. Accordingly, for presentation simplicity in the sequel, we call $F_{n,i}$ as the {\bf starting vector} of the separation line $S_{n,i}$, and call $F_{n,i+1}$ as the {\bf ending vector} of the separation line $S_{n,i}$. Here the starting vector of a separation line means the optimal configuration of the target element when ${h}^*$ rotates prior to the separation line, and the ending vector of a separation line means the optimal configuration of the target element when ${h}^*$ rotates past the separation line.

\begin{figure}[!t]
\centering
\includegraphics[width=2.4in]{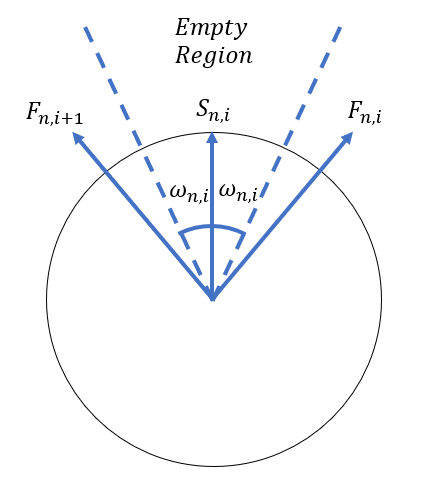}
\caption{Scenario when the counterclockwise angle from $F_{n,i}$ to $F_{n,i+1}$ is less than $\pi$.}
\label{lt90}
\end{figure}

\item If the counterclockwise angle from $F_{n,i}$ to $F_{n,i+1}$ is more than $\pi$, as shown in Fig.~\ref{mt90}, then in the region from $F_{n,i}$ to $F_{n,i+1}$, the $n$th element has two separation lines with amplitude being $1$: the first separation line $S_{n,i,1} = e^{j(\angle F_{n,i} + \pi/2)}$ has a $\pi/2$ angle to vector $F_{n,i}$, and the second separation line $S_{n,i,2} = e^{j(\angle F_{n,i+1} - \pi/2)}$ has a $\pi/2$ angle to vector $F_{n,i+1}$. From Theorem 1 we know that it is impossible for ${h}^*$ to overlap with any one of the two separation lines. From Theorem 1, we also have the following results. When ${h}^*$ rotates from $F_{n,i}$ to the first separation line, we have $g_n^* = F_{n,i}$; when ${h}^*$ rotates from the first separation line to the second separation line, we have $g_n^*=0$; when ${h}^*$ rotates from the second separation line to $F_{n,i+1}$, we have $g_n^*=F_{n,i+1}$. Thus, the starting vector and ending vector of the first separation line $S_{n,i,1}$ are $F_{n,i}$ and $0$, respectively, and the starting vector and ending vector of the second separation line $S_{n,i,2}$ are $0$ and $F_{n,i+1}$, respectively,

\end{itemize}
\begin{figure}[!t]
\centering
\includegraphics[width=2.4in]{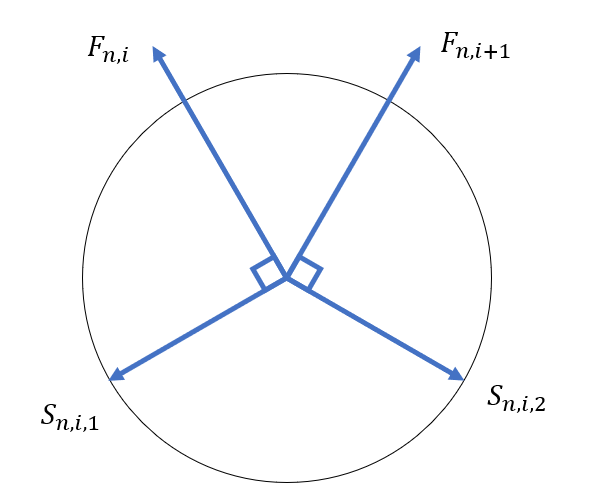}
\caption{Scenario when the counterclockwise angle from $F_{n,i}$ to $F_{n,i+1}$ is more than $\pi$.}
\label{mt90}
\end{figure}

Overall, for the $n$th element, there are $K+1$ separation lines in total. These separation lines partition the circle into $K+1$ sectors. When ${h}^*$ rotates in a particular sector, the optimal configuration of the $n$th element remains unchanged and is the starting vector of the separation line that is immediately after the sector in the counterclockwise direction. When ${h}^*$ rotates beyond the separation line, then the optimal configuration of the $n$th element changes to the ending vector of the separation line (which is also the starting vector of the $n$th element's next separation line).

For the considered system with $N$ elements, there are $N(K+1)$ separation lines in total. These separation lines partition the circle into $N(K+1)$ sectors. When ${h}^*$ is within a sector, the optimal configurations of the $N$ elements do not change. Thus, for ${h}^*$ in each sector, we can get the optimal configuration of all the elements and calculate the optimal $h$. Then among all the $N(K+1)$ calculated optimal $h$ values in the $N(K+1)$ sectors, we pick up the one with the largest amplitude, and the corresponding configuration of the $N$ elements is the optimal configuration of the considered system.

Overall, our proposed method needs to search $N(K+1)$ sectors. So the complexity is $O(N(K+1))$, which is linear with $N$.

\subsection{A Fast Algorithm for the Search Steps}

In our proposed method, we have $N(K+1)$ search steps for the $N(K+1)$ sectors, and in each search step, $N$ vector additions are needed to calculate ${h}^*$ for the corresponding sector. So in all $N(K+1)$ search steps, we need $N^2(K+1)$ vector additions in total. Further, to figure out where each sector starts and ends, we need to know the order of the  $N(K+1)$  separation lines on the circle in the complex plane, i.e., we need to sort the  $N(K+1)$  separation lines in ascending order of their arguments. Generally, to sort the $N(K+1)$ arguments of the separation lines, traditional sorting algorithms may need $N(K+1) \log [N(K+1)]$ complexity.

Next, we will develop a fast algorithm. Our fast algorithm needs much fewer vector additions to calculate the $N(K+1)$ values of ${h}^*$ in the $N(K+1)$ sectors and can sort the $N(K+1)$ arguments of the separation lines in a faster way.

Our fast algorithm is motivated by the following observation. There are $N(K+1)$  separation lines, which partition the circle into $N(K+1)$ sectors. Consider that ${h}^*$ rotates counterclockwise from one sector (say Sector 1) to another sector (say Sector 2), i.e., ${h}^*$ is passing a separation line. Denote the starting vector and ending vector of the separation line as $A$ and $B$, respectively. When ${h}^*$ is in Sector 1, denote the corresponding optimal $h$ as $h^*_1$. Then when ${h}^*$ is in Sector 2, the optimal $h$ is given as $h^*_2 = h^*_1 - A + B$. Thus, from the optimal $h$ value in Sector 1, we can get the optimal $h$ value in Sector 2 by using only two vector additions\footnote{From $h^*_1$ to $h^*_2$, we need one vector subtraction and one vector addition. Since vector subtraction and vector addition have the same computation burden, here one vector subtraction and one vector addition are counted as two vector additions.}.

Based on the above observation, our fast algorithm works as follows. Among the $N(K+1)$ sectors, we let $\angle {h}^*$ start within any specific sector and calculate its optimal $h$, which needs $N$ vector additions. Then we rotate ${h}^*$ counterclockwise until it returns to its original position. When ${h}^*$ passes a separation line, we need two vector additions to get the optimal $h$ in the next sector. Overall, to get optimal $h$ in all the $N(K+1)$ sectors, the number of vector additions that we need is $N + 2 \times N(K+1)=N(2K+3)$.

Next we show how our fast algorithm will sort the $N(K+1)$ arguments of the separation lines.%, traditional sorting algorithms may need $N(K+1) \log [N(K+1)]$ complexity. Next, we give a faster algorithm to achieve the sorting.

Recall that $v_n$ is the concatenated channel coefficient related to the $n$th element. Assume the $N$ elements are indexed such that $\angle v_1 < \angle v_2 <...<\angle v_N$. Consider the $n$th element. From (\ref{e:vn_to_Fni}), we have
\begin{equation}
\angle F_{n,i} = \angle v_n + \phi _ i { ~\mod~} 2\pi.
\end{equation}
Based on Section \ref{sec:optimalgnfromknowangle}, we have the following results:
\begin{itemize}

\item If the counterclockwise angle from $F_{n,i}$ to $F_{n,i+1}$ is less than $\pi$, the $n$th element has one separation line between $F_{n,i}$ and $F_{n,i+1}$, and the argument of the separation line is expressed as
\begin{equation}\label{e:Fargumenteq1}
\begin{aligned}
&(\angle v_n + {\frac{\phi_i + \phi_{i+1}}{2}}) ~\mod~2\pi,\quad i\in\{1,2,...,K-1\};\\
&(\angle v_n + {\frac{\phi_K + \phi_{1} + 2\pi}{2}}) ~\mod~2\pi,\quad i=K.
\end{aligned}
\end{equation}
\item If the counterclockwise angle from $F_{n,i}$ to $F_{n,i+1}$ is more than $\pi$, the $n$th element has two separation lines between $F_{n,i}$ and $F_{n,i+1}$, and the arguments of the separation lines are expressed as
\begin{equation}\label{e:Fargumenteq2}
(\angle v_n + {\phi_i + \pi/2}) ~\mod~2\pi
\end{equation}
and
\begin{equation}\label{e:Fargumenteq3}
(\angle v_n + {\phi_{i+1} - \pi/2}) ~\mod~2\pi,
\end{equation}
respectively.

\end{itemize}

Accordingly, for the $n$th element, from $F_{n,1}$ to $F_{n,2}$ and continuing to $F_{n,K}$ and back to $F_{n,1}$, we have $K+1$ separation lines. Denote the arguments of the $K+1$ separation lines as $A_{n,1}, A_{n,2},...,A_{n,K+1}$.
\begin{comment}

We have
\begin{equation}\label{e:A_exp}
A_{n,k} = H_{n,k} ~\mod~2\pi,~~~~k\in \{1,2,...,K+1\},
\end{equation}
in which $H_{n,k}$ has an expression the same as the bracketed term in (\ref{e:Fargumenteq1}), (\ref{e:Fargumenteq2}), or (\ref{e:Fargumenteq3}).
\end{comment}

Next we show how our fast algorithm sorts $A_{n,k}$'s ($n\in \{1,2,...,N\}, k\in \{1,2,...,K+1\}$) in a faster way.

All the $A_{n,k}$'s ($n\in \{1,2,...,N\}, k\in \{1,2,...,K+1\}$) form a two-dimensional matrix as follows:
\begin{equation}\label{e:matrix_exp}
\mathcal{A}=
\begin{bmatrix}
A_{1,1} & A_{1,2} & \ldots
& A_{1,K+1} \\
A_{2,1} & A_{2,2} & \ldots
& A_{2,K+1} \\
\vdots & \vdots & \ddots
& \vdots \\
A_{N,1} & A_{N,2} & \ldots
& A_{N,K+1}
\end{bmatrix}.
\end{equation}

In a real system, usually $K$ is much smaller than $N$. So in matrix $\mathcal{A}$, the number of rows is more than the number of columns. We first try to sort each column of $\mathcal{A}$. For the $l$th column, if we assign its $N$ elements as arguments of $N$ unit-amplitude vectors, and place the $N$ unit-amplitude vectors in a complex plane, we can see that the $N$ unit-amplitude vectors are in counterclockwise order starting from the first unit-amplitude vector. So in the $l$th column of the matrix, all elements but one are smaller than their subsequent elements (here we consider $A_{1,l}$ as the subsequent element of $A_{N,l}$). Thus, to sort the elements in the $l$th column of $\mathcal{A}$, we only need to find out the exception, i.e., the element that is larger than its subsequent element, which can be found with complexity $O(N)$. Denote the exception element as $A_{p,l}$. Then the $l$th column of the matrix can be sorted in ascending order by moving elements $A_{p+1,l}, A_{p+2,l},...,A_{N,l}$ to be above (in front of) $A_{1,l}$.
We do the sorting for all $K+1$ columns with total complexity  $O(N(K+1))$. Now we have $K+1$ sorted columns. We can merge the $K+1$ sorted columns into one sorted array using the Min-Heap algorithm \cite{ref29}, with complexity $O\big(N(K+1)\log (K+1)\big)$.

As a summary, in our fast algorithm to calculate $h^*$ in all the $N(K+1)$ sectors, we need $N(2K+3)$ vector additions and a sorting algorithm with complexity being $O\big(N(K+1)\log(K+1)\big)$. We can see that both the number of vector additions and the sorting complexity are linear with $N$.

\section{An Interesting Insight}\label{sec:empty_region}
From Theorem 1 in the preceding section, we know that ${h}^*$ cannot overlap with a separation line. Actually, next we will show that around a separation line, there exists a region that $h^*$ cannot be within that region. We call this region an {\it empty region}.

For the $n$th element, consider the separation line(s) between $F_{n,i}$ and $F_{n,i+1}$.

Firstly, consider the case that the counterclockwise angle from $F_{n,i}$ to $F_{n,i+1}$ is less than $\pi$. Then the $n$th element has a separation line, denoted $S_{n,i}$, between $F_{n,i}$ and $F_{n,i+1}$. The empty region around $S_{n,i}$ is shown in Fig.~\ref{lt90}. Next, we give an expression of the angle $\omega_{n,i}$.

We revisit the proof that $h^*$ cannot overlap with separation line $S_{n,i}$. This is because when $g_n$ takes either $F_{n,i}$ or $F_{n,i+1}$, there is always a contradiction. For example, assume $h^*$ overlaps with $S_{n,i}$ and $g_n$ takes $F_{n,i}$, as shown in Fig.~\ref{deltaf}. If $g_n$ switches from $F_{n,i}$ to $F_{n,i+1}$, then the achieved $h$ is expressed as $h^\dag = h^* + \Delta F_{n,i}$, in which $\Delta F_{n,i}=F_{n,i+1} - F_{n, i}$, as shown in Fig.~\ref{deltaf}. We can see that $h^*$, $\Delta F_{n,i}$ and $h^\dag$ form a triangle. Obviously we have $|h^*| < |h^\dag|$, which is a contradiction. Note that $\Delta F_{n,i}$ is a constant vector here. Now in Fig.~\ref{deltaf}, if $h^*$ rotates clockwise, then we can see that $h^\dag$ also rotates clockwise, and $|h^\dag|$ decreases. When $h^*$ keeps rotating, the contradiction always exists (i.e., we always have $|h^*| < |h^\dag|$) until $h^*$ arrives at a position such that $h^\dag$ and $h^*$ have equal amplitude, as shown in Fig.~\ref{omega}, Then the angle between $h^*$ and $S_{n,i}$ is $\omega_{n,i}$. Accordingly, we have
\begin{equation}\label{e:sin_omega}
\sin \omega_{n,i} = \frac{|\Delta F_{n,i}|/2}{|h^*|}.
\end{equation}
For vector $\Delta F_{n,i}$, from Fig.~\ref{omega} we have
\begin{equation}
|\Delta F_{n,i}| = 2 |F_{n,i}|\times |\sin \frac{\phi_{i+1}-\phi_i}{2}| = 2 |v_n|\times |\sin \frac{\phi_{i+1}-\phi_i}{2}|.
\end{equation}
Thus, we can get
\begin{equation}
\omega_{n,i} = \arcsin \frac{|v_n|\times |\sin \frac{\phi_{i+1}-\phi_i}{2}|}{|h^*|}.
\label{mtwidth1}
\end{equation}
\begin{figure}[!t]
\centering
\includegraphics[width=2.4in]{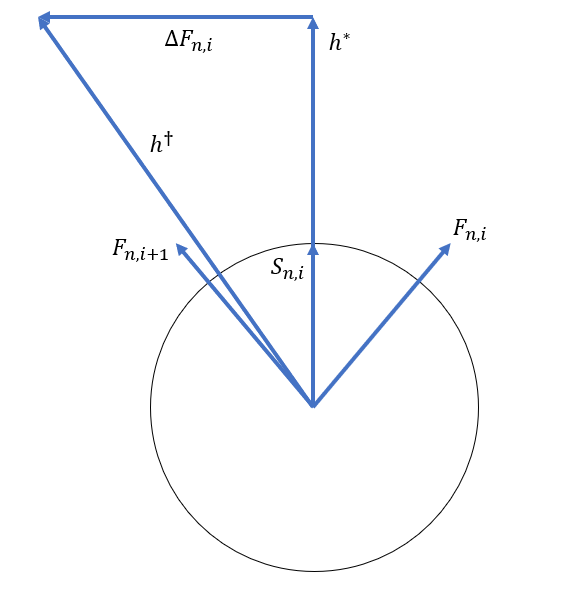}
\caption{$h^*$ overlapping with $S_{n,i}$.}
\label{deltaf}
\end{figure}

\begin{figure}[!t]
\centering
\includegraphics[width=2.4in]{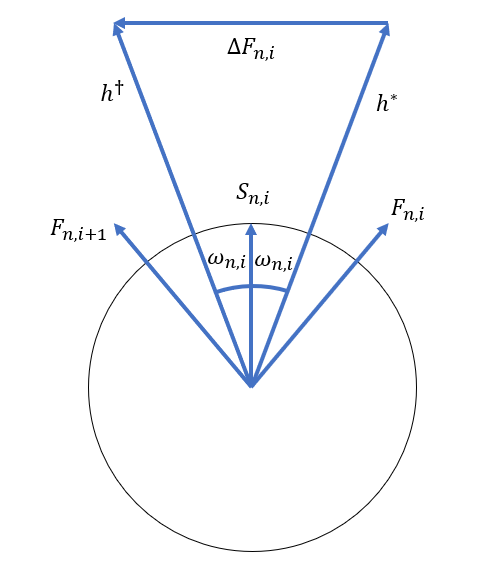}
\caption{$h^\dag$ and $h^*$ having equal amplitudes.}
\label{omega}
\end{figure}
Secondly, consider the case that the counterclockwise angle from $F_{n,i}$ to $F_{n,i+1}$ is more than $\pi$. Then the $n$th element has two separation lines, denoted as $S_{n,i,1}$ and $S_{n,i,2}$, between $F_{n,i}$ and $F_{n,i+1}$, as shown in Fig.~\ref{mtpi}. There is an empty region around each separation line, and the angle $\omega_{n,i,1}$ and $\omega_{n,i,2}$ are given as follows (the detailed derivations are omitted for presentation conciseness).
\begin{equation}
\omega_{n,i,1} = \omega_{n,i,2} = \arcsin \frac{|v_n|}{2|h^*|}.
\label{mtwidth2}
\end{equation}
\begin{figure}[!t]
\centering
\includegraphics[width=3in]{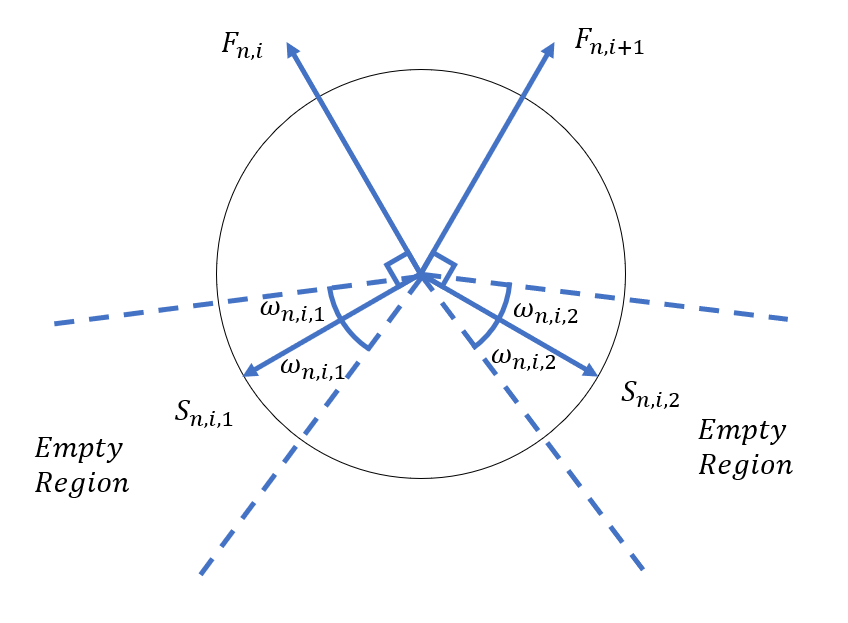}
\caption{Empty regions when the counterclockwise angle from $F_{n,i}$ to $F_{n,i+1}$ is more than $\pi$.}
\label{mtpi}
\end{figure}

Define {\it empty ratio} of the system  as the ratio of the total area of the empty regions to area of the whole circle. Also define $\gamma_\text{empty}^\text{UB}$ as the ratio of the summation of angles of all empty regions to $2\pi$. Since there may be overlap among the empty regions, we can see that $\gamma_\text{empty}^\text{UB}$ is an upper bound of the empty ratio, and superscript ``UB" stands for ``upper bound."

Next we give an approximation of $\gamma_\text{empty}^\text{UB}$ when $N$ is large.

For simplicity of approximation, we assume that for any $i$, the counterclockwise angle from $F_{n,i}$ to $F_{n,i+1}$ is always not more than $\pi$. Thus, the $n$th element has $K$ separation lines: $S_{n,1}, S_{n,2},...,S_{n,K}$, and the angle of the empty region for separation line $S_{n,i}$ is $2\omega_{n,i}$, and we have (\ref{mtwidth1}) for $\omega_{n,i}$, which is equivalent to the following equation.
\begin{equation}
\sin \omega_{n,i} = \frac{|v_n|\times |\sin \frac{\phi_{i+1}-\phi_i}{2}|}{|h^*|}.
\label{mtwidth1_new}
\end{equation}

When $N$ is large, $h^*$ is also large, and thus, the right-hand side of (\ref{mtwidth1_new}) is close to zero. Thus, expression (\ref{mtwidth1_new}) can be approximated as
\begin{equation}\label{e:apprx_omega_3}
\omega_{n,i} \approx \frac{|v_n|\times |\sin \frac{\phi_{i+1}-\phi_i}{2}|}{|h^*|}.
\end{equation}

$\phi_{i+1}-\phi_i$ is the gap between two phase shifts, with average value being $\frac{2\pi}{K}$. We use this average value to approximate $\phi_{i+1}-\phi_i$, and thus, expression  (\ref{e:apprx_omega_3}) becomes
\begin{equation}
\omega_{n,i} \approx \frac{ |v_n| \sin \frac{\pi}{K}}{|h^*|}.
\end{equation}
So for the $n$th element, the summation of the angles of empty regions associated with the element's $K$ separation lines (i.e., $S_{n,1},S_{n,2},...,S_{n,K}$) is expressed as
\begin{equation}
\sum_{i=1}^K 2\omega_{n,i} \approx \frac{2K|v_n| \sin \frac{\pi}{K} }{|h^*|}.
\end{equation}
Since we have $N$ elements with $NK$ separation lines, the summation of the angles of empty regions associated with the $NK$ separation lines is expressed as
\begin{equation}\label{e:apprx_omega_4}
\sum_{n=1}^N \sum_{i=1}^K 2\omega_{n,i} \approx \frac{ 2K \sin \frac{\pi}{K} \sum_{n=1}^N |v_n|}{|h^*|}.
\end{equation}
An upper bound of $|h^*|$ is $\sum_{n=1}^N |v_n|$ (i.e., when all RIS paths are perfectly aligned). We use this upper bound to approximate $|h^*|$ in (\ref{e:apprx_omega_4}) and get
\begin{equation}\label{e:apprx_omega_5}
\sum_{n=1}^N \sum_{i=1}^K 2\omega_{n,i} \approx 2K \sin \frac{\pi}{K}.
\end{equation}
Accordingly, when $N$ is large, $\gamma_\text{empty}^\text{UB}$ (upper bound of the empty ratio) is approximated as
\begin{equation}\label{e:empty_ratio_UB}
\gamma_\text{empty}^\text{UB} = \frac{\sum_{n=1}^N \sum_{i=1}^K 2\omega_{n,i}}{2\pi} \approx \frac{K\sin \frac{\pi}{K}}{\pi}.
\end{equation}

\section{Simulation Results}
This section shows simulation results that demonstrate the performance of the proposed method. In all simulations, the results are averaged over 1000 realizations. Path loss parameters are assumed to be the same for all RIS elements. $|h^\prime_n|$ and $|h^{\prime\prime}_n|$ are set to -80 dB and -60 dB, respectively. Also, $\angle(h_d)$ is set to 0 and $\angle(v_n)$ is chosen uniformly at random from the interval $[0,2\pi)$ for each element.

In order to make a comparison, an alternative method called \emph{Closest Point Projection (CPP)}\cite{ref11} is utilized to solve the problem. This method assumes that $h^{*}$ is in the same direction as $h_d$ and configures the elements in a way that they are as close as possible to $h_d$.  The results of the exhaustive search over all $(K+1)^N$ possible configurations are also presented to verify the optimality of the proposed method.

Fig.~\ref{capvsn} shows the channel capacity ($C$) of the proposed method, the CPP method, and the exhaustive search method. $|h_d|$ is set to -140 dB and $\frac{P}{BN_0}$ is set to 100 dB.
An arbitrary set of phase shifts has been chosen for each element:
\begin{equation}
\theta_n \in \{e^{j\dfrac{\pi}{6}}, e^{j\dfrac{5\pi}{6}}, 0{\}}.
\end{equation}
It is seen that the proposed method has exactly the same channel capacity as that of the exhaustive search method and is always better than the CPP method. This verifies the optimality of our proposed method.

\begin{figure}[!t]
\centering
\includegraphics[width=3.8in]{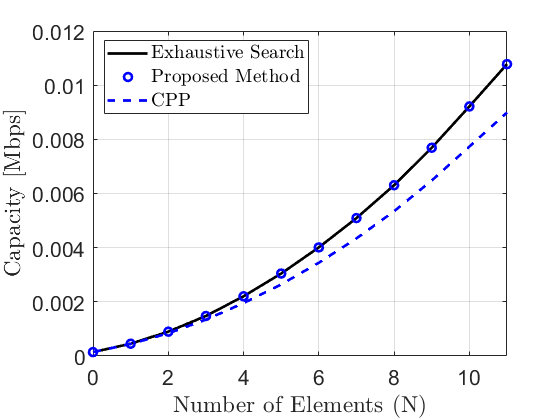}
\caption{Channel capacity of the proposed method, the CPP method, and the exhaustive search method.}
\label{capvsn}
\end{figure}

%The fast algorithm always yields the optimal solution while the CPP algorithm gives us a sub-optimal one.

In the subsequent simulation, we will measure only the amount of performance gain that the proposed method gives us over the CPP method. This gain is defined as:
\begin{equation}
\text{Performance\;Gain\;[\%]} = \frac{C_{\text{Proposed}} - C_{\text{CPP}}}{C_{\text{CPP}}}\times 100,
\end{equation}
in which $C_{\text{Proposed}}$ and $C_{\text{CPP}}$ are channel capacity achieved by the proposed method and the CPP method, respectively.

Fig.~\ref{pgvshd} shows the performance gain when the $|h_d|$ varies from -140~dB to -100~dB. $N$ is set to 50 and the rest of the parameters are similar to those in Fig.~\ref{capvsn}. When $|h_d|$ is small, i.e., there is a weak direct path between the transmitter and receiver, the performance gain is high. The gain becomes lower and converges to 0 as $|h_d|$ becomes larger. The reason is as follows. If $|h_d|$ is large enough, $h_d$ would become the dominant term in \eqref{deqn_ex55} and thus, both $C_{\text{Proposed}}$ and $C_{\text{CPP}}$ would be almost equal to $B \log_2(1+\frac{P|h_d|^2}{BN_0})$.
\begin{figure}[!t]
\centering
\includegraphics[width=3.6in]{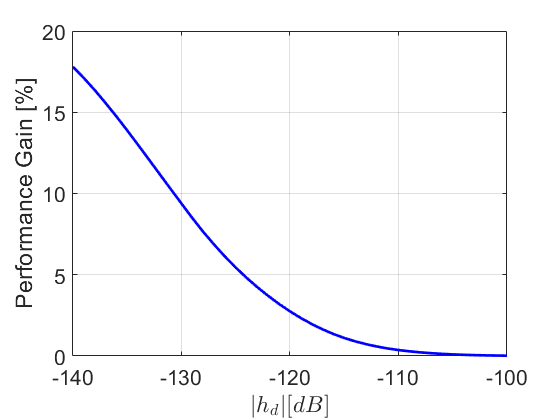}
\caption{Performance gain versus $|h_d|$.}
\label{pgvshd}
\end{figure}

\begin{figure}[!t]
\centering
\includegraphics[width=3.6in]{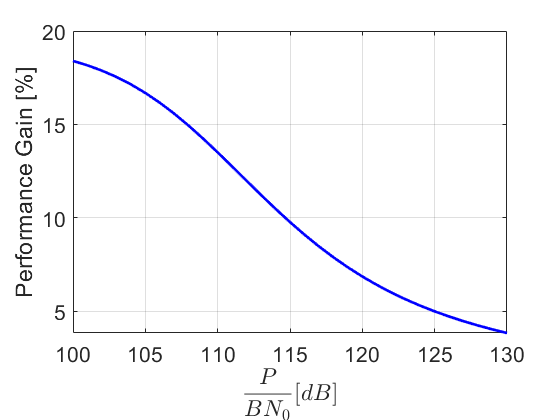}
\caption{Performance gain versus $\frac{P}{BN_0}$.}
\label{pgvspbn}
\end{figure}

Fig.~\ref{pgvspbn} shows the performance gain when $\frac{P}{BN_0}$ varies. $N$ is set to 50 and the rest of the parameters are similar to those of  Fig.~\ref{capvsn}.
It is seen that with the same $\frac{P}{BN_0}$, the proposed method always has a performance gain (i.e., the proposed method can result in a higher capacity than CPP). For example, when
$\frac {P}{BN_0}$ is 108 dB, the proposed method provides us 15 percent more capacity compared to CPP. In other words, the proposed method is more energy-efficient than CPP. It is also seen that the performance gain is higher when  $\frac{P}{BN_0}$ is small and decreases as $\frac{P}{BN_0}$ becomes larger. This is because when $\frac{P}{BN_0}$ is large enough, both $C_{\text{Proposed}}$ and $C_{\text{CPP}}$ would be almost equal to $B \log_2(\frac{P}{BN_0})$ and thus, the performance gain converges to 0.

Next, we will study how the choice of available phase shifts can affect the performance gain. When there are only two phase shifts, we can try different possibilities simply by changing the phase gap between the two phase shifts. This can be seen in Fig.~\ref{pgvsgap}. $N$ is set to 50 and the rest of the parameters are similar to those of Fig.~\ref{capvsn}. In all cases, significant gains are reported, with higher gains when the two available phase shifts deviate more from a uniform distribution. As we deviate from this uniform distribution point (i.e., when the phase gap is $\pi$), the gain significantly increases.

\begin{figure}[!t]
\centering
\includegraphics[width=3.8in]{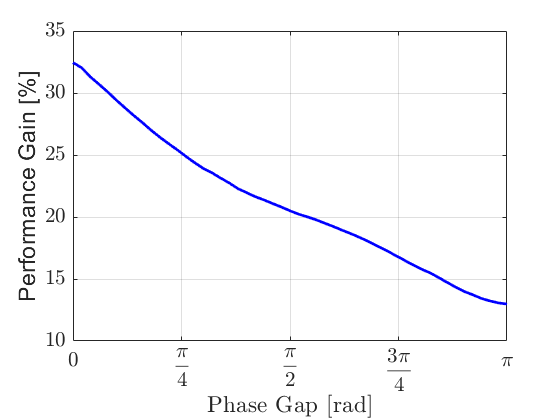}
\caption{Performance gain versus phase pap when there are two phase shifts.}
\label{pgvsgap}
\end{figure}

Now, let us try the same simulation again but with a larger set of configurations. When our set of configurations consists of three phase shifts, we can go through all possibilities by defining two phase gaps: One between the first and second phase shifts (phase gap 1) and the other one between the second and third phase shifts (phase gap 2). The result can be seen in Fig.~\ref{pgvsgap2}. $N$ is set to 50 and the rest of the parameters are similar to those of Fig.~\ref{capvsn}. Again, we can see the performance gain is minimum when both phase gaps are around $\frac{2\pi}{3}$ which is equivalent to the uniform distribution. When we deviate from this uniform distribution point, performance gain largely increases.
\begin{figure}[!t]
\centering
\includegraphics[width=3.4in]{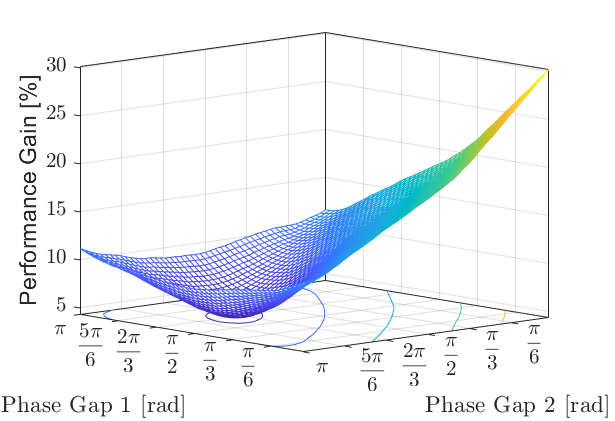}
\caption{Performance gain versus phase gap 1 and phase gap 2 when there are three phase shifts.}
\label{pgvsgap2}
\end{figure}

In the end, we are going to show how much portion is covered by empty regions. Here we assume $|h^*|$ is known and use \eqref{mtwidth1}, \eqref{mtwidth2} to calculate the angle $\omega$ for the separation lines. A single realization can be seen in Fig.~\ref{mtrealiz}. $N$ is set to 50 and the rest of the parameters are similar to those of Fig.~\ref{capvsn}. We expect to see $N(K+1) = 150$ separation lines. The black lines are the separation lines and the colored area covering each line is the empty region. As we can see, the majority of the circle is covered by empty regions.
\begin{figure}[!t]
\centering
\includegraphics[width=4in]{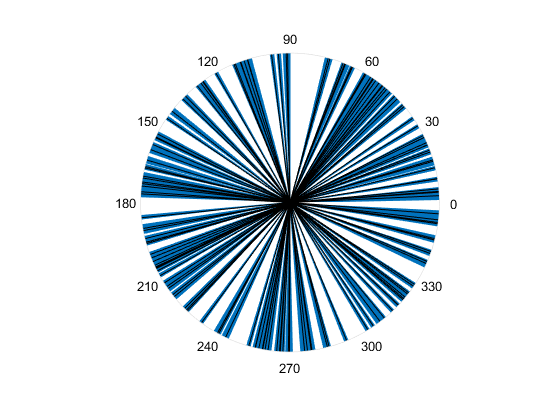}
\caption{A single realization demonstrating empty regions around each separation line (the numbers outside the circle have unit ``degree").}
\label{mtrealiz}
\end{figure}

Fig.~\ref{mtratio} shows the empty ratio versus the number of elements. Phase shifts are chosen uniformly for both $K=2$ and $K=3$. The rest of the parameters are similar to those of Fig.~\ref{capvsn}. It is seen that when $N$ becomes larger, the empty ratio converges to $0.6$ for $K=2$ and $0.62$ for $K=3$. From (\ref{e:empty_ratio_UB}), $\gamma_\text{empty}^\text{UB}$ (upper bound of the empty ratio) is calculated as $0.64$ for $K=2$ and $0.83$ for $K=3$. The simulated empty ratio values indeed are within the upper bound values.

\begin{figure}[!t]
\centering
\includegraphics[width=3.8in]{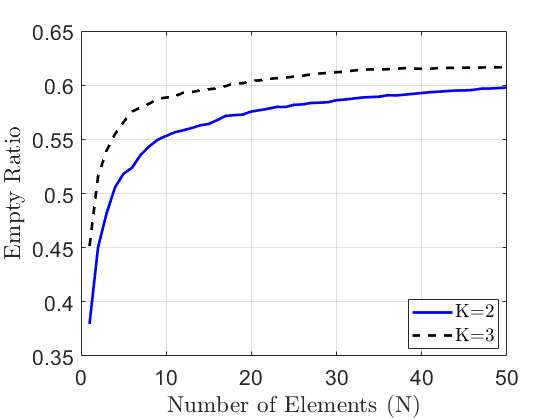}
\caption{Empty ratio versus the number of elements.}
\label{mtratio}
\end{figure}

\section{Conclusion}
For an RIS with discrete phase shifts, it is more reasonable to assume non-uniform phase shifts. This paper has investigated the optimal configuration of arbitrary non-uniform phase shifts. We have demonstrated that turning on all elements, which is commonly adopted in the literature, may not be optimal. We have theoretically proved that each element should be turned on with the highest reflection amplitude or be simply turned off. We have proposed a method that employs a series of search steps to determine the optimal configuration for each element. Notably, the number of search steps in our proposed method scales linearly with the number of elements. We have also proposed a fast algorithm to further reduce the computations in the search steps. We have also demonstrated the existence of empty regions, and our simulation results have shown that the empty regions occupy a large portion of the circle in the complex plane.

%In this paper, first, we proved that having continuous reflection amplitude is not necessary to configure RIS optimally. Thus, for any amplitude range, only the two boundaries of that range would suffice to achieve the optimal solution. We assume elements to have an arbitrary discrete set of phase shifts which is a close-to-reality assumption. It is commonly believed that applying phase shifts to elements is generally better than turning them off since making use of all elements is better than losing a few of them. We defined scenarios in which turning the elements off rather than applying phase shifts to them would be beneficial. A method is proposed to find the optimal solution with linear complexity. Using an exhaustive search to find the optimal solution would be extremely time-consuming and even impossible for large values of $N$. The proposed method can be quite helpful for those cases since it avoids redundant calculations and searches for the solution efficiently. We also realized that if $|h^*|$ is known, there will be certain values $\angle h^*$ will not be able to take. In other words, $h^*$ would not be located in certain areas. We referred to these areas as empty regions in our paper.

\bibliographystyle{IEEEtran}
\bibliography{biblo.bib}

\end{document}